\begin{document}
\newcommand{\eqrefc}[1]{\textcolor{black}{\eqref{#1}}}
\newtheorem{theorem}{Proposition}
\begin{center}
	\textbf{{\Large Joint Momenta-Coordinates States as  Pointer States \\
			\vspace{0.5cm}
			in Quantum Decoherence. }}
	\vspace{1cm}\\
\end{center}
\textbf{Nomenjanahary Tanjonirina Manampisoa$ ^{1} $, Ravo Tokiniaina Ranaivoson$^{2}$,\\ Roland Raboanary$^{3}$, Raoelina Andriambololona$^{4}$, Rivo Herivola Manjakamanana Ravelonjato$^{5}$, Naivo Rabesiranana$ ^{6} $.} \vspace{0.5cm}\\
\textit{ tanjonaphysics@gmail.com}$ ^{1} $\\
 \textit{tokhiniaina@gmail.com$ ^{2} $, tokiniainarvor13@gmail.com}$ ^{2} $\\
 \textit{r\_raboanary@yahoo.fr}$ ^{3} $\\
 \textit{raoelina.andriambololona@gmail.com}$ ^{4} $\\
 \textit{manjakamanana@yahoo.fr}$ ^{5} $\\
 \textit{rabesiranana@yahoo.fr}$ ^{6} $
 \vspace{0.5cm}\\
$ ^{1,2,4,5,6}$ Institut National des Sciences et Techniques Nucléaires (INSTN-Madagascar)\\
 		BP 3907 Antananarivo 101, Madagascar,
 	\textit{instn@moov.mg}\vspace{0.3cm}\\
 $ ^{2,4} ${\textit{TWAS Madagascar Chapter, Malagasy Academy,\\BP 4279 Antananarivo 101, Madagascar}}\vspace{0.3cm}\\
 $ ^{1,3} ${\textit{ Faculty of Sciences, iHEPMAD-University of Antananarivo,\\ BP 566 Antananarivo 101, Madagascar}}

	\begin{abstract}
		 Quantum decoherence provides a framework to study the emergence of classicality from quantum systems by showing how interactions with the environment suppress interferences and select robust states known as pointer states. Earlier studies have linked Gaussian coherent states to  pointer states. More recently, it was conjectured that more general quantum states called joint momenta-coordinates states may serve  as more suitable candidates to be pointer states. These states are associated to the concept of quantum phase space and saturate, by definition,   generalized uncertainty relations. In this work, we rigorously prove this conjecture. Building on the Lindblad framework for the damped harmonic oscillator, and applying Zurek's predictability-sieve criterion, we analyze both underdamped and overdamped regimes. We show that only in the underdamped case do joint momenta-coordinates states remain pure and robust for all times, establishing them as the true pointer states. This extends Isar's earlier underdamped treatment, generalizes the concept beyond Gaussian approximations, and embeds classical robustness in the quantum phase space formalism, with potential applications in error-resistant quantum information.\\

		\textbf{Keywords:} Quantum Decoherence, Pointer States, Joint Momenta-Coordinates States, Uncertainty Principle,  Predictability Sieve.
		
	\end{abstract}

 \newpage		 
\section{Introduction}  
\hspace{0.80cm}
Quantum mechanics is one of the most successful theories in physics, with predictive power that has led to revolutionary technologies such as quantum computing. Despite this success, the emergence of classical behavior from quantum laws remains a central puzzle in the foundations of physics \cite{ Zurek, foundations}. For nearly a century, researchers have debated how definite outcomes arise from quantum superpositions. 
One of the leading approaches to this problem is quantum decoherence, which explains how the interaction of a system with its environment suppresses interference between different quantum states. Decoherence thus aims to provide  a mechanism for the transition from quantum to classical while staying within the conventional framework of quantum mechanics \cite{zeh, joos}. A key concept in this approach is that of pointer states: those special states that remain robust under environmental influence and correspond most closely to classical outcomes \cite{zurek1,zurek2}.

 \par
In classical mechanics, a particle is fully described by its position $ x $ and its momentum $ p $. In quantum mechanics, however, the Heisenberg uncertainty principle prevents these quantities from being specified simultaneously with arbitrary precision.  Various phase space formulations in quantum mechanics have been developed to address this limitation, including the Wigner function. However,  the phase space considered in this formulation is a classical one \cite{Weigner, Bondar}. More recently, the idea of quantum phase space based on joint momenta-coordinates states has been proposed \cite{ravo2,invariant,linear}. These states saturate generalized uncertainty relations and share some properties with classical states.\par
Previous studies, most notably those by Zurek and collaborators \cite{Zurek, zurek1,zurek2}, have shown that Gaussian coherent states often emerge as approximate pointer states because they minimize uncertainty relation and decohere more slowly than other superpositions. However, this identification is model-dependent and remains tied to a particular coordinate or momentum basis in Hilbert space.
 Others studies  applied the Lindblad formalism to the damped harmonic oscillator and provided important insights into the role of decoherence in open quantum systems \cite{scutaru, isar, ref2, ref3, ref4, ref5, ref6, revisiting}. A comprehensive treatment of the damped harmonic oscillator, covering both underdamped and overdamped regimes, was provided in the seminal work \cite{scutaru}. However, the crucial question of pointer states was not addressed in that work. While pointer states was discussed in the work of Isar in \cite{isar}, their analysis focused mainly on the underdamped regime, where the dynamics allows for Gaussian-like states that remain approximately stable under environmental influence. However, the complementary overdamped case, where dissipation overwhelms the restoring force, was not treated in their work.  A recent study \cite{highlighting} put forward the conjecture that  pointer states may, in fact, coincide with joint momenta-coordinates states, but left this connection without a formal proof.

In this present work, we provide the first rigorous proof to this conjecture. Specifically:
\begin{itemize}
	\item We extend  Isar's analysis \cite{isar} to both underdamped and overdamped cases, showing that only in the underdamped regime do pure states survive as robust pointer states. 
	\item We explain physically why in the overdamped case no pointer states exist.
	\item We show that the predictability-sieve criterion of Zurek, which identifies states least affected by decoherence, is satisfied precisely by joint momenta-coordinates states.

\end{itemize}
This result not only confirms the conjecture raised in \cite{highlighting}, but also strengthens the conceptual bridge between decoherence theory and quantum phase space.\\ In other  models such as for a free particle undergoing quantum Brownian motion, pointer states cannot be well-defined; in this case, the predictability sieve does not select well defined pointer states \cite{sieve}. For this reason, in this work we choose the damped harmonic oscillator model where the predictability sieve criterion can define the pointer states unambiguously. 
 \par  
This paper is structured as follows: in Section 2, we briefly review the concept of quantum decoherence and pointer states. We portray the concept of joint momenta-coordinates and quantum phase space in Section 3. In Section 4, we describe the Lindblad model for open quantum system and analyze the time evolution of the covariance matrix. In Section \eqrefc{section 5}, we prove that joint momenta-coordinates satisfy the predictability sieve criterion  and identify them as pointer states. Discussions and  conclusions are given in Section 6. Bold letters like $ \mathbf{p} $ are used to represent quantum operators, while normal letters correspond to eigenvalues or non-operator quantities. 

\section{Decoherence and pointer states}
Quantum decoherence theory is a quantitative model of how the transition from quantum to classical mechanics occurs, which involves systems interacting with an environment that effectively performs measurements \cite{decoherence}. Decoherence, which is the loss of definite phase relations between the different components of a superposition state, arises from the interaction of the system with an environment.\par
Decoherence is usually described using a Hilbert space $ \mathcal{ H} $ composed of three parts: the subsystem $ \mathcal{S} $ (the system), the measuring apparatus $ \mathcal{A} $  and the environment $\mathcal{E}$, such that
\begin{equation}
\mathcal{H}= \mathcal{H_{S}}\otimes \mathcal{H_{A}}\otimes \mathcal{H_{E}}  \, \, .
\end{equation} 
In an idealized scenario, the total state initially factorizes as a superposition of system states $ |s_{n}\rangle $ correlated with an initial apparatus state $ |a_{0}\rangle $ and environment state $ |e_{0}\rangle $ \cite{maximilian}
\begin{equation}
|\psi(0)\rangle= \left( \sum_{n}c^{n}|s_{n}\rangle\right)\otimes |a_{0}\rangle \otimes |e_{0}\rangle  = \left( \sum_{n}c^{n}|s_{n}\rangle\right) |a_{0}\rangle |e_{0}\rangle  \, \, .
\end{equation}
As the system interacts with the apparatus, correlations develop
\begin{equation}
|\psi(t_{1})\rangle=\left( \sum_{n}c^{n}|s_{n}\rangle |a_{n}(t_{1})\rangle\right)|e_{0}\rangle  \, \, .
\end{equation}
Further interaction with the environment leads to entanglement between all three components:
\begin{equation}
|\psi(t)\rangle= \sum_{n}c^{n}|s_{n}\rangle |a_{n}(t)\rangle |e_{n}(t)\rangle, \quad 0 < t_{1} < t \, \, .
\end{equation}
According to \cite{pointers}, we must make a few assumptions to establish such  dynamics. First, we assume that the interaction Hamiltonian governs the time evolution of the state $ |\psi \rangle  $. Next, we consider the part of the interaction Hamiltonian that couples the system $ \mathcal{S} $ to the apparatus $ \mathcal{A} $, and we assume that the system basis states $ |s_{n}\rangle $ remain unchanged except for a multiplicative factor (the same assumption applies to the interaction between the apparatus and the environment and the apparatus states $ |a_{n}(t) \rangle$). We also assume two distinct time scales: one for the system-apparatus, and another for the apparatus-environment interaction, with these processes considered to occur in sequence. In a more realistic setting, these assumptions may not hold strictly, but they still offer a useful approximation. As discussed further below, the general goal is to identify the apparatus states that best preserve their correlations with the system over time.\par
Since the environment is not directly observed, we trace over its degrees of freedom. This yields the reduced density matrix for the system-apparatus $ \mathcal{S}\mathcal{A} $ subsystem:
\begin{align}
\bm{\rho}_{SA}(t)= &Tr_{\mathcal{E}}|\Psi(t)\rangle \langle \Psi (t)| \notag \\ 
=&\sum_{n,m}c^{n}c_{m}^{*}\left\langle e^{m}(t) |e_{n}(t)\right\rangle |s_{n}\rangle  |a_{n}(t)\rangle \langle  s^{m}|\langle a^{m}(t)|  \, \, .  
\end{align}
A key mathematical claim of decoherence is that the states $ \left\lbrace |e_{n}(t)\rangle\right\rbrace  $ will rapidly become orthogonal to each other so that
\begin{equation}
\langle e^{m}(t)|e_{n}(t)\rangle \rightarrow \delta^{m}_{n}=\left\lbrace 
\begin{aligned}
1\quad if \quad n &=m\\
0 \quad if \quad n &\neq m \\
\end{aligned}
\right. \\ \, \, .
\end{equation}
In this limit, the reduced density matrix becomes diagonal in the basis  $  |s_{n}\rangle |a_{n}(t)\rangle $\cite{understanding}:
\begin{equation}
\bm{\rho}_{SA}(t)\rightarrow \sum_{n}|c^{n}|^{2}|s_{n}\rangle |a_{n}(t)\rangle\langle s^{n}|\langle a^{n}(t)| \, \, .
\end{equation}
The  collection of states $ \left\lbrace |s_{n}\rangle |a_{n}(t)\rangle \right\rbrace  \subset \mathcal{H_{S}} \otimes \mathcal{H_{A}}$ that retain their correlations with the system despite environmental disturbance are known as      pointer states. These are the states which get minimally affected by decoherence whereas a superposition of such states generally loses its quantum coherence quite rapidly. They are robust, stable against decoherence and correspond to classical outcomes\cite{venugopalan}. \\
This last equation indicates that $ \bm{\rho}_{SA}(t) $ becomes diagonal in (a subset of) pointer states. The interaction with the environment has suppressed interference terms of the type $ |s_{n}\rangle |a_{n}(t)\rangle \langle s^{m}(t)|a^{m}(t)|, m\neq n $, that would have been present in the $ \mathcal{ SA} $ density matrix if $ \mathcal{ A} $ were not coupled to $ \mathcal{ E} $.

 Most models of  decoherence adopt an environment represented by a large collection of harmonic oscillators (bath of harmonic oscillators) with which the system interacts, often through a coordinate-coordinate coupling \cite{anu}. This simple yet powerful model captures the essential mechanism by which environmental noise drives the selection of pointer states.

\section{Joint momenta-coordinates states and quantum phase space }
To construct a quantum phase space compatible with the uncertainty principle, we must first define its fundamental algebra. We may consider     
 the general multidimensional and relativistic  Canonical Commutation Relations (CCRs), characterized by a metric signature ($ D_{+},D_{-} $)  with a number of pair ($ \left[\mathbf{ p}_{\mu}, \mathbf{x}_{\nu}\right]  $) equal to $ D= D_{+} + D_{-}    $ i.e $ \mu=0,1,2,\cdots,D-1 $, which are given by \cite{highlighting, linear}:
\begin{equation}
\left\lbrace 
\begin{aligned}
\left[\mathbf{ p}_{\mu}, \mathbf{x}_{\nu}\right]  &=\mathbf{ p}_{\mu} \mathbf{x}_{\nu} - \mathbf{x}_{\nu}\mathbf{p}_{\mu}=i\hbar\eta_{\mu \nu}\\
\left[\mathbf{ p}_{\mu}, \mathbf{p}_{\nu}\right]  &=\mathbf{ p}_{\mu} \mathbf{p}_{\nu} - \mathbf{p}_{\nu}\mathbf{p}_{\mu}=0\\
\left[\mathbf{ x}_{\mu}, \mathbf{x}_{\nu}\right] &=\mathbf{ x}_{\mu} \mathbf{x}_{\nu} - \mathbf{x}_{\nu}\mathbf{x}_{\mu}=0  \, \, ,
\end{aligned}
\right.
\end{equation}
with 
\begin{equation}
\eta_{\mu \nu}=
\left\lbrace 
\begin{aligned}
1 \hspace{0.5cm} if \hspace{0.5cm}\mu&=\nu =0, 1,2,...,D_{+}\\
-1 \hspace{0.5cm} if \hspace{0.5cm} \mu&=\nu = D_{+}+1,  D_{+}+2,\cdots,D-1\\
0 \hspace{0.5cm} if \hspace{0.5cm} \mu &\neq \nu  \, \, .\\
\end{aligned}
\right.
\end{equation}
For the case of Minkowski's spacetime for instance, the signature is (+, -, -, -)=(1,3). The corresponding CCRs can be written in the following form : 
\begin{equation}
\left\lbrace 
\begin{aligned}
\left[\mathbf{ p}_{\mu}, \mathbf{x}_{\nu}\right]  &=\mathbf{ p}_{\mu} \mathbf{x}_{\nu} - \mathbf{x}_{\nu}\mathbf{p}_{\mu}
= \left\lbrace 
\begin{aligned}
i\hbar  \hspace{0.5cm} if \hspace{0.5cm}\mu&=\nu =0\\
-i\hbar  \hspace{0.5cm} if \hspace{0.5cm} \mu&=\nu =l=1,2,3\\
0 \hspace{0.5cm} if \hspace{0.5cm} \mu &\neq \nu\\
\end{aligned}
\right. \\
\left[\mathbf{ p}_{\mu}, \mathbf{p}_{\nu}\right]   &= \mathbf{ p}_{\mu} \mathbf{p}_{\nu} - \mathbf{p}_{\nu}\mathbf{p}_{\mu}=0\\
\left[\mathbf{ x}_{\mu}, \mathbf{x}_{\nu}\right]   &=\mathbf{ x}_{\mu} \mathbf{x}_{\nu} - \mathbf{x}_{\nu}\mathbf{x}_{\mu}=0 \, \, .
\end{aligned}
\right .
\end{equation}
Let us consider the example of a single pair $ (\textbf{p}, \textbf{x}) = \mathbf{({p_{1}}, {x_{1}})} $ with the metric $\eta=\eta_{11}=-1 $, the corresponding CCR is: 
\begin{equation}
\mathbf{\left[ p_{1}, x_{1}\right] }  =\mathbf{ p_{1} x_{1} - x_{1}p_{1}}=-i\hbar  \, \, .
\end{equation}
An advantage of this example is that we can highlight the distinction between covariant momentum and position operators $ \mathbf{p_{1}}$ and $\mathbf{ x_{1}} $(low index) and their contravariants analogous $ \mathbf{p^{1}}$ and $\mathbf{ x^{1}} $
(high index). One has the relation:
\begin{equation}
\left\lbrace 
\begin{aligned}
\mathbf{p_{1}}=&\eta_{11}\mathbf{p^{1}}=-\mathbf{p^{1}}\iff \mathbf{p^{1}}=-\mathbf{p_{1}}\\
\mathbf{x_{1}}=&\eta_{11}\mathbf{x^{1}}=-\mathbf{x^{1}}	\iff \mathbf{x^{1}}=-\mathbf{x_{1}}  \, \, .
\end{aligned}
\right.
\end{equation}
To reconcile the concept of phase space with the uncertainty principle, one should then consider quantum " momentum-coordinate joint states " that do not violate the uncertainty relation and are compatible with the CCRs. The best kind of states that satisfy these criteria  and saturate the uncertainty relations are the states denoted $ |\langle z_{1} \rangle \rangle  $  which correspond to Gaussian wave packets. These states $ |\langle z_{1} \rangle\rangle $ are also called phase space states. The expression of the coordinate wavefunctions which corresponds to this kind of states is \cite{highlighting}:
\begin{equation}
\langle x^{1}|\langle z_{1} \rangle \rangle =\left( \frac{1}{2\pi\mathcal{X}_{11}}\right) ^{1/4}e^{-\frac{\mathcal{B}_{11}}{\hbar^{2}}(x^{1}-\langle x^{1}\rangle)^{2}-\frac{i}{\hbar}\langle p_{1}\rangle x^{1}+ iK}  \, \, ,
\end{equation} 
where K is an arbitrary real number that is independent of the coordinates $ x^{1} $  ( $ e^{iK} $ is a unitary complex number) and 
\begin{equation}
\mathcal{B}_{11}=-i\frac{\varrho_{11}^{p}}{2\mathcal{X}_{11}}=\frac{\hbar^{2}}{4\mathcal{X}_{11}}-i\frac{\hbar\varrho_{11}}{2\mathcal{X}_{11}}= \frac{\mathcal{P}_{11}}{1+\frac{(\varrho_{11})^{2}}{4\hbar^{2}}}\left( 1-i\frac{\varrho_{11}}{2\hbar^{2}}\right) .
\end{equation}
The parameter $ \mathcal{B}_{11} $ in these relations is linked to the momenta-coordinates statistical variance-covariances, denoted $ \mathcal{P}_{11}, \mathcal{X}_{11}$ and $ \varrho_{11}$, corresponding to the state  $ |\langle z_{1} \rangle \rangle  $ itself. For any quantum state $\left| \Psi \right\rangle  $ , The momentum and coordinate mean values and statistical variance covariances are defined by the following relations  : 
\begin{equation}
\left\lbrace 
\begin{aligned}
\left\langle p_{1} \right\rangle &=\langle \psi |\mathbf{p}_{1}| \psi\rangle =- \langle \psi |\mathbf{p}^{1}| \psi\rangle = -\langle p^{1} \rangle\\
\left\langle x_{1} \right\rangle &=\left\langle \psi |\mathbf{x}_{1}| \psi\right\rangle = -\langle \psi |\mathbf{x}^{1}| \psi\rangle = -\langle p^{1} \rangle \\
\mathcal{P}_{11} &=\langle \psi |(\mathbf{p_{1}}-\langle p_{1}\rangle)^{2} | \psi\rangle = -\mathcal{P}_{1}^{1}=\mathcal{P}^{11} \\
\mathcal{X}_{11} &=\langle \psi |(\mathbf{x_{1}}-\langle x_{1}\rangle)^{2} | \psi\rangle = -\mathcal{X}_{1}^{1}=\mathcal{X}^{11} \\
\varrho_{11}^{p}&=\langle \psi |(\mathbf{p}_{1}-\langle p_{1}\rangle) \langle  (\mathbf{x}_{1}-\langle x_{1}\rangle)| \psi\rangle \\
\varrho_{11}^{x}&=\langle \psi |(\mathbf{x}_{1}-\langle x_{1}\rangle) \langle  (\mathbf{p}_{1}-\langle p_{1}\rangle)| \psi\rangle \\
\varrho_{11}&=\frac{1}{2}(\varrho_{11}^{p}+\varrho_{11}^{x})  \, \, .
\end{aligned}
\right.
\label{varhigh}
\end{equation}
Given the CCRs, the following relation is deduced 
\begin{equation}
\varrho_{11}=\frac{1}{2}(\varrho_{11}^{p}+\varrho_{11}^{x})= \varrho_{11}^{p}+i\frac{\hbar}{2}=\varrho_{11}^{x}-i\frac{\hbar}{2}  \, \, .
\end{equation}
The momentum-coordinate statistical variance-covariances matrix is given by
\begin{equation}
M_{11}=\begin{pmatrix}
\mathcal{P}_{11}& \varrho_{11}\\  
\varrho_{11} & \mathcal{X}_{11} 
\end{pmatrix} .
\end{equation}
By applying the scalar product properties in quantum state space along with the Cauchy-Schwarz inequality, it can be shown that the determinant of this matrix obeys the following relation \cite{invariant}:
\begin{equation}
\left| \begin{array}{cc}
\mathcal{P}_{11}& \varrho_{11}\\
\varrho_{11} & \mathcal{X}_{11}
\end{array}\right| 
= (\mathcal{P}_{11})(\mathcal{X}_{11})-(\varrho_{11})^{2}\geq \frac{\hbar^{2}}{4}  \, \, .
\end{equation}
This relation represents a rigorous form of the uncertainty principle that also  includes the statistical momentum-coordinate covariance $ \varrho_{11}  $.  The states $ |\langle z_{1} \rangle \rangle  $ saturate  the uncertainty relation, i.e. we have for $\left| \Psi \right\rangle  $ = $ |\langle z_{1} \rangle \rangle  $  :
\begin{equation}
\left| \begin{array}{cc}
\mathcal{P}_{11}& \varrho_{11}\\
\varrho_{11} & \mathcal{X}_{11}
\end{array}\right| 
= (\mathcal{P}_{11})(\mathcal{X}_{11})-(\varrho_{11})^{2}= \frac{\hbar^{2}}{4}\,\, . \label{eqmatrix}
\end{equation}
It can be shown that a state $ |\langle z_{1} \rangle \rangle =|\langle p_{1}\rangle, \langle x_{1}\rangle ,\mathcal{P}_{11} ,\varrho_{11}\rangle $ is an eigenstate of the operator 
\begin{equation}
\mathbf{ z_{1}}=\mathbf{ p_{1}}-\frac{2i}{\hbar}\mathcal{B}_{11}\mathbf{x^{1}} \, \, .
\end{equation} 
The corresponding eigenvalue equation is given by the following relation :
\begin{equation}
\mathbf{ z_{1}}|\langle p_{1}\rangle, \langle x_{1}\rangle ,\mathcal{P}_{11} ,\varrho_{11}\rangle=\left(  \langle p_{1}\rangle-\frac{2i}{\hbar}\mathcal{B}_{11}\langle x^{1}\rangle\right)|\langle p_{1}\rangle, \langle x_{1}\rangle ,\mathcal{P}_{11} ,\varrho_{11}\rangle=\langle z_{1}\rangle |\langle z_{1}\rangle \rangle \, \, .
\end{equation}
The eigenvalue of the operator $ \mathbf{ z_{1}} $ for a state $ |\langle p_{1}\rangle, \langle x_{1}\rangle ,\mathcal{P}_{11} ,\varrho_{11}\rangle $ is its mean value $  \langle z_{1}\rangle  =\langle p_{1}\rangle -\frac{2i}{\hbar}\mathcal{ B}_{11} \langle x^{1}\rangle $ itself. 
The joint momentum coordinate state $ |\langle z_{1} \rangle \rangle =|\langle p_{1}\rangle, \langle x_{1}\rangle ,\mathcal{P}_{11} ,\varrho_{11}\rangle $ is defined by the mean values $ \langle x_{1} \rangle $ and $ \langle p_{1} \rangle $ for a given  value of the statistical variance covariance matrix $M_{11} $.\\
\\
As in \cite{linear}, the quantum phase space can be defined as the set $ \left\lbrace (\left\langle p_{1}\right\rangle ,\left\langle x_{1}\right\rangle  )\right\rbrace   $ of the possible values of $ \langle p_{1}\rangle  $ and $ \langle x_{1}\rangle  $ for a given value of the statistical variance-covariance matrix $ M_{11} = \begin{pmatrix}
\mathcal{P}_{11}& \varrho_{11}\\  
\varrho_{11} & \mathcal{X}_{11} 
\end{pmatrix}  $  or equivalently as the set     $ \left\lbrace \left\langle z_{1}\right\rangle \right\rbrace  $ of all possible values of the mean values of the operator $ \mathbf{z_{1}} $ . The concept of quantum phase space can describe the natural relations between quantum mechanics, phase space, statistical mechanics and thermodynamics and may have interesting applications in this framework. One can for instance  apply this concept to establish new kinds of relativistic and quantum corrections for the Maxwell-Boltzmann ideal gas model as shown in the reference \cite{correction}.
The multidimensional and generalization of the concept of quantum phase space  are given in the reference \cite{highlighting, neutrino}.
\section{The Lindblad Model}
 The Lindblad theory provides a self-consistent
treatment of damping as a possible extension of quantum mechanics to open systems. The Lindblad master equation is the most general form of Markovian master equation for open quantum systems, describing irreversible and non-unitary dynamics. Whereas Markovian master equations valid under the Markov approximation( no memory of the environment's past), the Lindblad master equation is the specific, mathematically rigorous form within that class that ensures the dynamics remain completely positive and trace-preserving. \\
Using the structural theorem of Lindblad \cite{lindblad},
which gives the most general form of the bounded, completely dissipative Liouville operator $ L $, we obtain the explicit form of the most general time-homogeneous quantum mechanical Markovian master equation, as shown in reference \cite{isar}:
\begin{equation}
L(\bm{\rho}(t))=\frac{d \bm{\rho}(t)}{dt}=-\frac{i}{\hbar}[\bm{H}, \bm{\rho}(t) ]+ \frac{1}{2\hbar}\sum_{j}([\bm{V_{j}}\bm{\rho}(t),\mathbf{V^{\dagger}_{j}}]+[\bm{V_{j}},\bm{\rho}(t)\bm{V^{\dagger}_{j}}])\label{lindb1} \, \, .
\end{equation}
 
Here, $ \mathbf{H} $ is the Hamiltonian of the system in the absence of the environment. The operators $ \bm{V_{j}}, \bm{V_{j}^{\dagger}} $ are bounded operators on the Hilbert space $ \mathcal{H} $ of the Hamiltonian and they model the effect of the environment.\\
To get an exactly soluble model, a simple condition to the operators $ \mathbf{H}, \bm{V_{j}}, \bm{V_{j}^{\dagger}} $ should be imposed \cite{isar}: they are functions of the basic observables $ \mathbf{x} $ and $ \mathbf{ p} $ of the one-dimensional quantum mechanical system (with $ [\mathbf{x},\mathbf{p}]=i\hbar \mathbf{I} $, where $ \mathbf{I} $ is the identity operator on $ \mathcal{H} $) of the model. Then
the harmonic oscillator Hamiltonian $ \mathbf{H} $ is chosen of the general form\footnote{In the original references  (see, e.g., Ref.\cite{isar, scutaru}), the position operator is denoted by $ \hat{q} $; we use $ \mathbf{x} $ here to conform with the notation in Section 3.}  
\begin{equation}
\mathbf{H}=\frac{\mathbf{p}^{2}}{2m}+\frac{m\omega^{2}}{2}\mathbf{x}^{2}+\frac{\mu}{2}(\mathbf{px}+\mathbf{xp}) \, \, .
\end{equation} 
In this exact soluble model, as shown in \cite{ hasse}, one has to take $ \bm{V_{j}}  $ as a first-degree non-commutative polynomial in coordinate and momentum, and because $ \mathbf{p} $ and $ \mathbf{x} $ span the linear space of the first-degree polynomials in $ \mathbf{p} $ and $ \mathbf{x} $, there exist only two linearly independent operators $ \bm{V_{1}} $ and $ \bm{V_{2}} $, with $ \bm{V}_{i}=a_{i}\mathbf{p} + b_{i}\mathbf{x}, i=1,2 $ and $ a_{i}, b_{i}  $ are complex numbers.\\
With the following notations which introduces diffusion coefficients: 
\begin{equation}
D_{xx}=\frac{\hbar}{2}\sum_{j=1}^{2}|a_{j}|^{2},    D_{pp}=\frac{\hbar}{2}\sum_{j=1}^{2}|b_{j}|^{2}, D_{px}=D_{xp}=-\frac{\hbar}{2}\operatorname{Re}\sum_{j=1}^{2}a^{*}_{j}b_{j}
\end{equation}

$ L(\bm{\rho}) $ takes the following form 
\begin{equation}
\begin{split}
L(\bm{\rho})= &-\frac{i}{\hbar}[\mathbf{H},\bm{\rho}]-\frac{i(\lambda+\mu)}{2\hbar}[\mathbf{x},\bm{\rho}\mathbf{ p} +\mathbf{p} \bm{\rho} ]+\frac{i(\lambda-\mu)}{2\hbar}[\mathbf{p},\bm{\rho}\mathbf{x} +\mathbf{x} \bm{\rho} ] \\
&-\frac{D_{xx}}{\hbar^{2}}[\mathbf{p},[\mathbf{p},\bm{\rho}]]-
\frac{D_{pp}}{\hbar^{2}}[\mathbf{x},[\mathbf{x},\bm{\rho}]]+\frac{D_{px}}{\hbar^{2}}([\bm{p},[\mathbf{x},\bm{\rho}]]+[\mathbf{x},[\mathbf{p},\bm{\rho}]])\label{lindb2}
\end{split} 
\end{equation}
where \begin{equation}
\lambda= - \operatorname{Im}\left(  \sum_{j=1}^{2}a^{*}_{j}b_{j}\right)  \, \, ,
\end{equation}
represents the friction constant.\\ 
The diffusion coefficients satisfy the following fundamental constraints \cite{Isarsand, scutaru}:
\begin{equation}
D_{pp} >0 ,\hspace{1cm}
D_{xx}> 0 , \hspace{1cm}
 D_{xx}D_{pp}-(D_{xp})^{2}\geq \frac{\lambda^{2}\hbar}{4} \label{lindb3} \, \, .
\end{equation}
In the particular case when the asymptotic states is a Gibbs state 
\begin{equation}
\bm{\rho}_{G}(\infty)=\dfrac{e^{-\frac{\mathbf{H}_{0}}{kT}}}{Tr e^{-\frac{\mathbf{H}_{0}}{kT}}}
\end{equation}
these coefficients reduce to 
\begin{equation}
D_{pp}=\frac{\lambda + \mu}{2}\hbar m \omega \coth \frac{\hbar \omega}{2kT},\quad D_{xx}=\frac{\lambda - \mu}{2}\frac{\hbar}{m\omega} \coth \frac{\hbar \omega}{2kT},\quad  D_{px}=0 \, \, ,\label{Gibbs2}
\end{equation}
where T is the temperature of the thermal bath and $ \mathbf{ H_{0}}=\frac{\mathbf{p}^{2}}{2m}+\frac{m\omega^{2}}{2}\mathbf{x}^{2} $, the Hamiltonian of the free harmonic oscillator. Following the relation \eqref{Gibbs2}, the fundamental  constraints are satisfied only if $ \lambda > |\mu| $, i.e. 
\begin{equation}
D_{xx}D_{pp}-D_{px}^{2}=\frac{\lambda^{2}-\mu^{2}}{4} \hbar^{2}\coth^{2}\frac{\hbar \omega}{2kT}
\end{equation}\\
In the following we denote by $ \sigma_{AA} $ the dispersion of the operator $ \bm{A} $ and we donote by $ \sigma_{AB} $ the correlation of operators $ \mathbf{A} $ and $ \bm{B} $, i.e. :
\begin{align}
\sigma_{AA}=&\langle \bm{ A}^{2}\rangle - \langle \bm{A }\rangle ^{2} \label{sigmaA}\\
\sigma_{AB}=&\frac{1}{2}(\bm{AB}+\bm{BA})-\langle \bm{A} \rangle \langle \bm{B} \rangle \label{sigmaAA} \, \, .
\end{align}
 where $ \langle \bm{A} \rangle \equiv \sigma_{A}=Tr (\bm{\rho A})   \text{ and } Tr(\bm{\rho})=1$.
 The relations \eqref{sigmaA} and \eqref{sigmaAA}, it follows that:
 \begin{align}
 \frac{d\sigma_{A}(t)}{dt}=& TrL(\bm{\rho}(t))\bm{A} \label{sigmaA1}\\
 \frac{d\sigma_{AA}(t)}{dt}=& TrL(\bm{\rho}(t))\bm{A}^{2}-2\frac{d\sigma_{A}(t)}{dt}\sigma_{A}(t)\label{sigmaAA2} \, \, .
 \end{align}  
 Using the above definitions and the relation \eqref{varhigh}, we use the following notations :

\begin{equation}
\left\lbrace
\begin{aligned}
\sigma_{p}(t)&=Tr(\bm{\rho}(t)\mathbf{p}) \equiv  \left\langle p_{1} \right\rangle =\langle \psi |\mathbf{p}_{1}| \psi\rangle   \\
\sigma_{x}(t)&=Tr(\bm{\rho}(t)\mathbf{x})\equiv\left\langle x_{1} \right\rangle =\left\langle \psi |\mathbf{x}_{1}| \psi\right\rangle    \\
\sigma_{pp}(t)&=Tr(\bm{\rho}(t)\mathbf{p}^{2})-\sigma_{p}(t)^{2}\equiv\mathcal{P}_{11}(t) =\langle \psi |(\mathbf{p_{1}}-\langle p_{1}\rangle)^{2} | \psi\rangle \\
\sigma_{xx}(t)&=Tr(\bm{\rho}(t)\mathbf{x}^{2})-\sigma_{x}(t)^{2} \equiv\mathcal{X}_{11}(t) =\langle \psi |(\mathbf{x_{1}}-\langle x_{1}\rangle)^{2} | \psi\rangle  \\
\sigma_{px}(t)&=Tr(\bm{\rho}(t)(\frac{\mathbf{px}+\mathbf{xp}}{2})-\sigma_{x}(t)\sigma_{p}(t)\equiv \varrho_{11}(t)=\frac{1}{2}(\varrho_{11}^{p}(t)+\varrho_{11}^{x}(t)) \, \, .
\end{aligned}
\right.
\end{equation} 
The correlation matrix is given by the following relation :
\begin{equation}
\sigma(t)=\begin{pmatrix}
\sigma_{xx}(t)& \sigma_{px}(t)\\
\sigma_{px}(t) & \sigma_{pp}(t) 
\end{pmatrix} \, \, .
\end{equation} 
  with $  \det \sigma(t) \geq \frac{\hbar^{2}}{4}$ .\\ \\
The study of the time evolution of the momentum-coordinate statistical variance-covariances  matrix $ M_{11}(t) $ is equivalent to the study of the time evolution of the correlation matrix $ \sigma(t) $ as they have the same properties, quantum dynamics, and physical meaning, and  both satisfy the general uncertainty principle $ \det M_{11}(t)= \det\sigma(t) \geq \frac{\hbar^{2}}{4}$, with 
\begin{equation}
M_{11}(t)=\begin{pmatrix}
\mathcal{P}_{11}(t)& \varrho_{11}(t)\\
\varrho_{11}(t) & \mathcal{X}_{11}(t)
\end{pmatrix} \equiv \sigma(t)=\begin{pmatrix}
\sigma_{xx}(t)& \sigma_{px}(t)\\
\sigma_{px}(t) & \sigma_{pp}(t)
\end{pmatrix} \, \, .\label{correspondance}
\end{equation}
Equation \eqref{correspondance} expresses the correspondence between the quantum phase space and the Lindblad descriptions of the system. Physically, their equivalence means that the evolution of quantum fluctuations and correlations obtained from the Lindblad master equation can be directly interpreted in the language of the quantum phase space.

\subsection{Evolution of the covariance matrix}

According to \cite{scutaru}, it can be shown that there exist a matrix $ N $ with property $ N^{2}=I $ where $ I $ is the identity matrix   and a diagonal matrix $ K $ such that $ R=NKN $.
From this, the covariance matrix evolves in the course of the
time according to \cite{isar} :
\begin{equation}
X(t)=(Ne^{-Kt}N)X(0)-N(e^{-Kt}+I)K^{-1}ND \, \, ,
\end{equation}
with the vector X(t) and D are given by :
\begin{equation}
X(t)=
\left(  
\begin{aligned}
m\omega\sigma_{xx}(t)\\
\sigma_{pp}(t)/m\omega\\
\sigma_{px}(t)
\end{aligned}
\right) 
\text{ and }
D=
\left(  
\begin{aligned}
2m\omega D_{xx}\\
2D_{pp}/mw\\
2D_{px}
\end{aligned}
\right) .
\end{equation}
The existence of such state implies that 
\begin{equation}
\lim\limits_{t\rightarrow \infty} e^{-Kt}=0 \,\, .
\end{equation}
Therefore 
\begin{equation}
X(\infty)=(NK^{-1}N)D = R^{-1}D\,\, .
\end{equation} 
The integration of \eqref{sigmaA1} and \eqref{sigmaAA2} is straightforward (where $ A $ is set equal to $ p $ or $ x $, using the expression for $ L(\bm{\rho(t)}) $ from the master equation) . 
As shown in the reference \cite{scutaru}, there are two cases: the underdamped case ($ \omega > \mu $) and the overdamped case  ($ \mu > \omega $).\\
In the underdamped case ($ \omega > \mu $) the matrices $ N $ and $ K $ are given by:

	\begin{equation}
N=\frac{1}{2i\Omega} 
\begin{pmatrix}
\mu + i\Omega & \mu-i\Omega & 2\omega\\
\mu - i\Omega & \mu+i\Omega & 2\omega\\
-\omega & -\omega &-2\mu
\end{pmatrix} 
\end{equation}
and
\begin{equation}
K= 
\begin{pmatrix}
2(\lambda-i\Omega) & 0 & 0\\
0 & 2(\lambda+i\Omega) & 0\\
0 & 0 &2\lambda
\end{pmatrix} 
\end{equation}
with $ \Omega^{2}= \omega^{2}-\mu^{2} $ .\\ \\
In the overdamped case  ($ \mu > \omega $) the matrices $ N $ and $ K $ are given by 
	\begin{equation}
N=\frac{1}{2\nu} 
\begin{pmatrix}
\mu + \nu & \mu-\nu & 2\omega\\
\mu - \nu & \mu+\nu & 2\omega\\
-\omega & -\omega &-2\mu
\end{pmatrix} 
\end{equation}
and
\begin{equation}
K= 
\begin{pmatrix}
2(\lambda-\nu) & 0 & 0\\
0 & 2(\lambda+\nu) & 0\\
0 & 0 &2\lambda
\end{pmatrix} 
\end{equation}
with $ \nu^{2}= \mu^{2}-\omega^{2} $ .\\ \\
The relation $ X(\infty)=(NK^{-1}N)D = R^{-1}D $  is remarkable because it gives a very simple connection between the asymptotic values ( $ t\rightarrow \infty $) of $ \sigma_{xx}(t), \sigma_{pp}(t),\sigma_{px}(t) $ and the diffusion coefficients $ D_{xx},D_{pp},D_{px} $. As an immediate consequence of it, both underdamped and overdamped cases have the following explicit form \cite{isar}: 
\begin{align}
\sigma_{xx}(\infty)=&\frac{1}{2(m\omega)^{2}\lambda(\lambda^{2}+\omega^{2}-\mu^{2})}\left( (mw)^{2}(2\lambda(\lambda+\mu)+\omega^{2})D_{xx}+ \omega^{2}D_{pp}+2m \omega^{2}(\lambda+\mu)D_{px}\right) \\
\sigma_{pp}(\infty)=&\frac{1}{2\lambda(\lambda^{2}+\omega^{2}-\mu^{2})}\left( (m\omega)^{2}\omega^{2}D_{xx}+(2\lambda(\lambda-\mu)+\omega^{2})D_{pp}-2m\omega^{2}(\lambda-\mu)D_{px}\right) \\
\sigma_{px}(\infty)=&\frac{1}{2m\lambda(\lambda^{2}+\omega^{2}-\mu^{2})}\left( -(\lambda+\mu)(m\omega)^{2}D_{xx}+(\lambda-\mu)D_{pp}+2m(\lambda^{2}-\mu^{2})D_{px}\right) \,\, .
\end{align}
These relations show that the asymptotic values $ \sigma_{xx}(\infty), \sigma_{pp}(\infty), \sigma_{px}(\infty) $ do not depend on the initial values $ \sigma_{xx}(0), \sigma_{pp}(0), \sigma_{px}(0) $.
Conversely, if the relations\\ $ D=-RX(\infty) $ are considered, i.e. 
\begin{equation}
\begin{pmatrix}
2m\omega D_{xx}\\
\frac{2}{2m\omega}D_{pp}\\
2 D_{px}
\end{pmatrix}
 = \begin{pmatrix}
2(\lambda-\mu) & 0 & -2\omega\\ 
0 & 2(\lambda+\mu) & 2\omega\\
\omega & -\omega & 2\lambda
\end{pmatrix}  
\begin{pmatrix}
m\omega \sigma_{xx}(\infty) \\
\frac{1}{m\omega}\sigma_{pp}(\infty)\\
\sigma_{px}(\infty)
\end{pmatrix}\,\, .
\end{equation}
This yields: 
\begin{align}
D_{xx}&=(\lambda-\mu)\sigma_{xx}(\infty)-\frac{1}{m}\sigma_{px}(\infty) \notag\\
D_{pp}&=(\lambda+\mu)\sigma_{pp}(\infty)+m\omega^{2}\sigma_{px}(\infty) \label{then}\\ 
D_{px}&=\frac{1}{2}(m\omega^{2}\sigma_{xx}(\infty)-\frac{1}{m}\sigma_{pp}(\infty)+2\lambda\sigma_{px}(\infty))\,\, .\notag 
\end{align}
These expressions are crucial as they allow us to determine the specific diffusion coefficients $ D_{xx}, D_{pp}, D_{px} $ that are required to maintain a desired asymptotic state with variances $ \sigma_{xx}(\infty), \sigma_{pp}(\infty),\sigma_{px}(\infty) $. In the following section, we will use this to solve the inverse problem: finding the asymptotic state that corresponds to a pure time invariant pointer state.
\section{Identification of the joint momenta-coordinates as pointer states}\label{section 5}
\subsection{The generalized uncertainty}
Within the Lindblad framework for the damped harmonic oscillator, the relation \eqref{lindb3} represents a necessary condition for the  generalized uncertainty inequality to be fulfilled. Exploiting the fact that the linear positive mapping defined by $ \bm{A}\rightarrow Tr(\bm{\rho}\bm{A}) $ is completely positive, one can derive the following inequality \cite{scutaru}  :
\begin{equation}
D_{pp}\sigma_{xx}(t) + D_{xx}\sigma_{pp}-2D_{px}\sigma_{px}(t)\geq \frac{\hbar^{2}\lambda}{2}
\label{gen1}\,\, .
\end{equation}
This inequality, which must hold for all times $ t $, is equivalent to the generalized uncertainty inequality at any time $ t $, 
\begin{equation}
\sigma_{xx}(t)\sigma_{pp}(t)-\sigma_{px}^{2}(t)\geq\frac{\hbar^{2}}{4}
\label{gen2}\,\, ,
\end{equation}
provided that the initial values $ \sigma_{xx}(0) , \sigma_{pp}(0)$ and $ \sigma_{px}(0) $ for $ t=0 $ satisfy this inequality.\\
In particular, if the initial state corresponds to the ground state of the harmonic oscillator, then \cite{isar} 
\begin{equation}
\sigma_{xx}(0)= \frac{\hbar }{2m\omega}, \sigma_{pp}(0)=\frac{m\hbar\omega}{2}, \sigma_{px}(0)=0\,\, .
\end{equation}
\subsection{The predictability sieve criterion}
Zurek et \textit{al}. have earlier derived an approximate expression for the "predictability sieve" which is the measure of the increase in entropy\footnote{We use linear entropy in this paper.}, $ S_{l}= Tr(\bm{\rho}- \bm{\rho}^{2}) $, for a harmonic oscillator coupled to a heat bath whose dynamics is described by the Markovian Master equation.
Zurek's theory \cite{zurek} states that the pure states which minimize the entropy production in time, i.e. $ \dot{S}_{l}(t) =0 $ are the maximally predictive states.  These states remain least affected by the openness of the system and form the preferred set of states in the Hilbert space of the system, known as the $ \textit{pointer states} $. Their evolution is predictable with the principle of the least possible entropy production. By using the predictability sieve criterion, let us first look for the state which remain pure for all time.
\subsubsection{The state which remains pure for all time}
\begin{theorem}
	By using the complete positivity property of the dynamical semigroup $ \Phi_{t} $,  the relation :
\begin{equation}
Tr(\Phi_{t}(\bm{\rho})\sum_{j}\bm{V_{j}^{\dagger}}\bm{V_{j}}) =\sum_{j}Tr\left( \Phi_{t}(\bm{\rho})\bm{V_{j}^{\dagger}}\right) Tr\left( \Phi_{t}(\bm{\rho})\bm{V_{j}}\right)
\label{prop1} 
\end{equation}
represents the necessary and sufficient condition for $ \bm{\rho}(t)=\Phi_{t}(\bm{\rho}) $ to be a pure state for all $ t\geq $ 0. 	
\end{theorem}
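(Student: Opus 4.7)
The plan is to track purity through the function $P(t) = Tr(\bm{\rho}^2(t))$, which equals $1$ exactly when $\bm{\rho}(t)$ is pure and is bounded above by $1$ for any density matrix. The statement therefore reduces to showing that \eqref{prop1} is equivalent to $\dot{P}(t) = 0$ evaluated along a pure-state trajectory starting from a pure initial state, since $\bm{\rho}(t)$ is pure for all $t\geq 0$ iff $P(t)\equiv 1$ iff $\dot{P}(t)\equiv 0$ together with $P(0)=1$.

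The first step is to differentiate $P$ along the Lindblad flow \eqref{lindb1}. Using $\dot{P} = 2\,Tr(\bm{\rho}\dot{\bm{\rho}})$, the Hamiltonian commutator contributes zero by cyclicity of the trace, while the dissipator, once the double commutators are rewritten in the canonical Lindblad form $\bm{V_j}\bm{\rho}\bm{V_j^\dagger} - \frac{1}{2}\{\bm{V_j^\dagger V_j},\bm{\rho}\}$ and cyclicity is applied again, collapses to the identity
\begin{equation}
\dot{P}(t) = \frac{2}{\hbar}\sum_{j}\left[Tr(\bm{V_j^\dagger}\bm{\rho}(t)\bm{V_j}\bm{\rho}(t)) - Tr(\bm{V_j^\dagger V_j}\bm{\rho}^2(t))\right].
\end{equation}
Evaluated at a pure state $\bm{\rho}(t) = |\psi_t\rangle\langle\psi_t|$, the idempotence $\bm{\rho}^2 = \bm{\rho}$ reduces the two traces to $Tr(\bm{\rho}\bm{V_j^\dagger})\,Tr(\bm{\rho}\bm{V_j})$ and $Tr(\bm{\rho}\bm{V_j^\dagger V_j})$ respectively, so $\dot{P}(t)=0$ at that instant is exactly \eqref{prop1}.

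Closing both directions is then direct. For necessity: if $\bm{\rho}(t)$ is pure for all $t \geq 0$, then $P(t) \equiv 1$, hence $\dot{P}(t) \equiv 0$, and by the equivalence above \eqref{prop1} holds at every $t$. For sufficiency: starting from a pure $\bm{\rho}(0)$, \eqref{prop1} at $t=0$ forces $\dot{P}(0)=0$, which keeps $P$ pinned at $1$ in a neighborhood; the standing hypothesis that \eqref{prop1} holds for every $t$ then propagates purity globally by a standard continuation argument. The main subtlety I anticipate is this self-referential closure: the reduction of the derivative formula to \eqref{prop1} relies on $\bm{\rho}$ already being pure, so one must verify that the hypothesis is logically consistent with the purity it is supposed to enforce. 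This is precisely where complete positivity of $\Phi_t$ is used — it guarantees $P(t)\leq 1$ everywhere, so under $\dot{P}\equiv 0$ the purity cannot drift away from its initial maximum value, and the pure-state manifold is invariant along the flow. I expect this consistency step, rather than the algebra of the derivative, to be the real technical point of the proof.
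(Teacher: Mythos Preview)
Your proposal is correct and follows essentially the same route as the paper: differentiate $Tr(\bm{\rho}^2)$ along the Lindblad flow, drop the Hamiltonian term by cyclicity, and then simplify the dissipator contribution using the pure-state identities $\bm{\rho}^2=\bm{\rho}$ and $\bm{\rho}\bm{A}\bm{\rho}=Tr(\bm{\rho}\bm{A})\bm{\rho}$. Your treatment of the sufficiency direction and of where complete positivity actually enters is in fact more careful than the paper's, which simply asserts the equivalence without addressing the self-consistency issue you flag.
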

	\begin{proof}
			A state is pure  if $ Tr(\bm{\rho}^{2})=Tr(\bm{\rho}) = 1 $, for all time t $ \geq 0 $. This is equivalent to
		\begin{equation}
		\frac{d}{dt}Tr(\bm{\rho}(t)^{2}) = 2 Tr[\bm{\rho}(t) L(\bm{\rho}(t))]=0,  \text{for all $ t\geq $ 0} \label{pure1}\,\, .
		\end{equation}
		Using the explicit form of $ L(\bm{\rho}(t)) $ it follows that 
		\begin{equation}
		\frac{d}{dt}Tr(\bm{\rho}(t)^{2}) =\frac{1}{2\hbar}\sum_{j} \left( Tr (\bm{\rho}(t)\bm{V_{j}}\bm{\rho}(t)\bm{V_{j}^{*}})-Tr(\bm{\rho}(t)^{2}\bm{V_{j}^{*}}\bm{V_{j}})\right) =0\,\, .
		\end{equation}
		For a pure state $ \bm{\rho}^{2}(t)=\bm{\rho}(t) $ and $ \bm{\rho}(t)\bm{A\rho}(t)=Tr (\bm{\rho}(t)\bm{A})\bm{\rho}(t) $ for any $\bm{ A} \in \mathcal{ B}(\mathcal{ H}) $. Then \eqref{pure1} can be written as 
		\begin{equation}
		Tr( \Phi_{t}(\bm{\rho})\sum_{j}\bm{V_{j}^{\dagger}V_{j}})  =\sum_{j}Tr\left( \Phi_{t}(\bm{\rho})\bm{V_{j}^{\dagger}}\right) Tr\left( \Phi_{t}(\bm{\rho})\bm{V_{j}}\right) \,\, .
		\end{equation}
	\end{proof}
This equality is a generalisation of the purity condition to all Markovian master equations \cite{scutaru, isar}. Importantly, it also provides the dynamical justification for the equality sign in the generalized uncertainty relation \eqref{eqmatrix}: a state that remains pure for all times must necessarily saturate the uncertainty bound. In this sense, the complete-positivity condition \eqref{prop1} ensures that the determinant of the momentum-coordinate covariance matrix attains its minimum value, thereby identifying the joint momenta-coordinates states as the only minimum uncertainty pure states compatible with Lindblad dynamics.
\begin{theorem}
For environment operators $ \bm{V_{j}} $ of the form \eqrefc{lindb1}, the pure state condition \eqref{prop1} takes the following form, which corresponds to equality in the relation \eqref{gen1}
\begin{equation}
D_{pp}\sigma_{xx}(t) + D_{xx}\sigma_{pp}(t)-2D_{px}\sigma_{px}(t)= \frac{\hbar^{2}\lambda}{2}
\end{equation}
and the generalized uncertainty relation \eqref{gen2} becomes into the following minimum uncertainty equality for pure states 
\begin{equation}
(\mathcal{P}_{11}(t))(\mathcal{X}_{11}(t))-(\varrho_{11}(t))^{2}=\sigma_{xx}(t)\sigma_{pp}(t)-\sigma_{px}^{2}(t)=\frac{\hbar^{2}}{4}\,\, .
\end{equation}	
\end{theorem}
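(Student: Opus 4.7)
My plan is to prove Proposition 2 in two steps: first translate the abstract pure-state condition of Proposition 1 into the stated linear identity on the covariances, then upgrade that identity to saturation of the Robertson--Schrödinger uncertainty by combining it with the fundamental Lindblad constraint \eqref{lindb3}.

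For the first step, I would substitute $\bm{V}_{j} = a_{j}\mathbf{p} + b_{j}\mathbf{x}$ into both sides of the purity condition $Tr(\bm{\rho}\sum_{j}\bm{V}_{j}^{\dagger}\bm{V}_{j}) = \sum_{j}Tr(\bm{\rho}\bm{V}_{j}^{\dagger})\,Tr(\bm{\rho}\bm{V}_{j})$. Expanding $\bm{V}_{j}^{\dagger}\bm{V}_{j}=|a_{j}|^{2}\mathbf{p}^{2}+|b_{j}|^{2}\mathbf{x}^{2}+a_{j}^{*}b_{j}\,\mathbf{p}\mathbf{x}+a_{j}b_{j}^{*}\,\mathbf{x}\mathbf{p}$ and taking the trace against $\bm{\rho}(t)$, I split each of $\langle\mathbf{p}\mathbf{x}\rangle$ and $\langle\mathbf{x}\mathbf{p}\rangle$ into a symmetric piece $\sigma_{px}+\langle\mathbf{p}\rangle\langle\mathbf{x}\rangle$ and an antisymmetric $\mp i\hbar/2$ piece coming from $[\mathbf{x},\mathbf{p}]=i\hbar\mathbf{I}$. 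The $\langle\mathbf{p}\rangle^{2}$, $\langle\mathbf{x}\rangle^{2}$, and $\langle\mathbf{p}\rangle\langle\mathbf{x}\rangle$ terms cancel exactly against the right-hand side, leaving $\sum_{j}|a_{j}|^{2}\sigma_{pp}+\sum_{j}|b_{j}|^{2}\sigma_{xx}+2\operatorname{Re}\bigl(\sum_{j}a_{j}^{*}b_{j}\bigr)\sigma_{px}+\hbar\operatorname{Im}\sum_{j}a_{j}^{*}b_{j}=0$. Inserting the identifications $\sum_{j}|a_{j}|^{2}=2D_{xx}/\hbar$, $\sum_{j}|b_{j}|^{2}=2D_{pp}/\hbar$, $\operatorname{Re}\sum_{j}a_{j}^{*}b_{j}=-2D_{px}/\hbar$, and $\operatorname{Im}\sum_{j}a_{j}^{*}b_{j}=-\lambda$ collapses the whole equation into the stated $D_{pp}\sigma_{xx}(t)+D_{xx}\sigma_{pp}(t)-2D_{px}\sigma_{px}(t)=\hbar^{2}\lambda/2$, which is exactly the saturated form of \eqref{gen1}.

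For the second step, I would recast this linear identity as the matrix trace $Tr(\mathcal{D}\,\Sigma(t))=\hbar^{2}\lambda/2$ with $\mathcal{D}=\begin{pmatrix}D_{pp}&-D_{px}\\-D_{px}&D_{xx}\end{pmatrix}$ and $\Sigma(t)=\begin{pmatrix}\sigma_{xx}(t)&\sigma_{px}(t)\\\sigma_{px}(t)&\sigma_{pp}(t)\end{pmatrix}$. Both matrices are $2\times 2$ real symmetric positive semidefinite ($\mathcal{D}$ by \eqref{lindb3}, $\Sigma(t)$ as a genuine covariance), so applying the AM--GM inequality to the nonnegative eigenvalues of $\mathcal{D}^{1/2}\Sigma(t)\mathcal{D}^{1/2}$ gives $Tr(\mathcal{D}\Sigma(t))\geq 2\sqrt{\det\mathcal{D}\,\det\Sigma(t)}$. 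Substituting $\det\mathcal{D}\geq\lambda^{2}\hbar^{2}/4$ from \eqref{lindb3} then yields $\det\Sigma(t)\leq\hbar^{2}/4$, and combining this with the generalized uncertainty $\det\Sigma(t)\geq\hbar^{2}/4$ from \eqref{gen2} forces the saturation $\sigma_{xx}(t)\sigma_{pp}(t)-\sigma_{px}^{2}(t)=\hbar^{2}/4$.

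The main obstacle I anticipate is the sign and convention bookkeeping in step one: the minus sign in $D_{px}=-\tfrac{\hbar}{2}\operatorname{Re}\sum_{j}a_{j}^{*}b_{j}$, the sign in $\lambda=-\operatorname{Im}\sum_{j}a_{j}^{*}b_{j}$, and the opposite contributions of $\mathbf{p}\mathbf{x}$ and $\mathbf{x}\mathbf{p}$ to the symmetrised mean value must all be tracked in parallel in order to reproduce the factor $\hbar^{2}\lambda/2$ with the correct sign. Once that is settled, the matrix inequality in step two is a one-line chain that delivers the uncertainty saturation automatically, since the only way the double inequality $\hbar^{2}\lambda/2=Tr(\mathcal{D}\Sigma(t))\geq 2\sqrt{\det\mathcal{D}\det\Sigma(t)}\geq\hbar\lambda\sqrt{\det\Sigma(t)}$ is compatible with $\det\Sigma(t)\geq\hbar^{2}/4$ is that equality holds throughout.
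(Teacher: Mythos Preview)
Your proposal is correct and takes a genuinely different route from the paper's own proof.

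For step one, the paper does not actually carry out the computation you describe: it simply states the linear identity as the content of the proposition and then proceeds, citing \cite{scutaru,isar}. Your explicit substitution of $\bm{V}_{j}=a_{j}\mathbf{p}+b_{j}\mathbf{x}$ into the abstract purity condition, followed by careful separation of the symmetric and antisymmetric parts of $\langle\mathbf{p}\mathbf{x}\rangle$ and $\langle\mathbf{x}\mathbf{p}\rangle$, is exactly what is needed and fills that gap cleanly. (You are right that the bookkeeping of the signs of $D_{px}$ and $\lambda$ is where the calculation can go wrong; your identifications are consistent with the definitions in the paper and produce the correct $\hbar^{2}\lambda/2$.) Note that \eqref{lindb3} in the paper carries a typographical $\hbar$ in place of $\hbar^{2}$; you implicitly use the dimensionally correct version $D_{xx}D_{pp}-D_{px}^{2}\geq\hbar^{2}\lambda^{2}/4$, which is confirmed by the paper's own \eqref{prop5}.

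For step two the contrast is sharper. The paper assumes \emph{both} equalities \eqref{gen1} and \eqref{gen2} hold for all $t$, eliminates $\sigma_{pp}$ between them, and solves the resulting system to pin down constant variances $\sigma_{xx}=D_{xx}/\lambda$, $\sigma_{pp}=D_{pp}/\lambda$, $\sigma_{px}=D_{px}/\lambda$; feeding these back through \eqref{then} then forces the specific diffusion coefficients \eqref{prop12}, reveals the underdamped restriction $\omega>\mu$, and finally verifies $\det\sigma=\hbar^{2}/4$ a posteriori. Your argument instead packages the linear identity as $\mathrm{Tr}(\mathcal{D}\Sigma)=\hbar^{2}\lambda/2$ and applies AM--GM to the eigenvalues of $\mathcal{D}^{1/2}\Sigma\mathcal{D}^{1/2}$ to get $\det\Sigma\leq\hbar^{2}/4$ directly, which together with \eqref{gen2} forces saturation without ever solving for the variances. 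This is shorter, is logically cleaner (it derives the uncertainty saturation from the purity condition rather than assuming it), and makes transparent that the conclusion rests only on the positivity constraint \eqref{lindb3}. What you lose relative to the paper is the additional output of the constructive route: the explicit constant variances and the underdamped condition \eqref{prop11}, which the paper needs later but which lie outside the proposition as stated.
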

\begin{proof}
	By eliminating $ \sigma_{pp} $ between the two equalities \eqrefc{gen1} and \eqref{gen2}, as in \cite{dekker}, the following relations which have to be fulfilled at any moment of time are obtained :
\begin{align}
&D_{pp}D_{xx}-D_{px}^{2}=\frac{\hbar^{2}\lambda^{2}}{4} \label{prop5}\\
&D_{pp}\sigma_{xx}(t)-D_{px}\sigma_{px}(t)-\frac{\hbar^{2}\lambda}{4}=0\\
&\sigma_{px}(t)\left( D_{pp}D_{xx}-D_{px}^{2}\right) -\frac{\hbar^{2}\lambda}{4}D_{px}=0 \label{prop7}\,\, .
\end{align}
Inserting the relation \eqrefc{prop5} into \eqrefc{prop7}: 
\begin{equation}
\sigma_{px}(t)\left( \frac{\hbar^{2}\lambda^{2}}{4}\right) -\frac{\hbar^{2}\lambda}{4}D_{px}=0 \Rightarrow \sigma_{xx}(t)= \frac{D_{xx}}{\lambda}\label{prop8}
\end{equation} 
Using \eqrefc{prop7} and \eqrefc{prop8} , then the pure states remain pure for all times only  if their variances-covariance  are constant in time and have the form: 
\begin{equation}
\sigma_{xx}(t)=\frac{D_{xx}}{\lambda} ,\quad \sigma_{pp}(t)=\frac{D_{pp}}{\lambda},\quad \sigma_{px}(t)=\frac{D_{px}}{\lambda}\label{prop9}\,\, .
\end{equation} 
Substituting the time-independent variances from  \eqrefc{prop9} into the expressions for the diffusion coefficients in  \eqrefc{then} yields the following constraints :
\begin{equation}
m\omega D_{xx}=\frac{1}{m\omega}D_{pp}, D_{px}=-m\mu D_{xx}=-\frac{\mu}{m\omega^{2}}D_{pp} \label{prop10}\,\, .
\end{equation}
Using the relation \eqref{prop10}, the relation \eqref{prop5} implies that :
\begin{equation}
m^{2}(\omega^{2}-\mu^{2})D_{xx}^{2}=\frac{\lambda^{2}\hbar^{2}}{4}\Rightarrow  D_{xx}=\frac{\lambda\hbar}{2m\Omega}\label{prop11}\,\, ,
\end{equation}
which can be satisfied only if $ \omega > \mu $ (underdamped case). We have denoted \\ 
$ \Omega^{2}=\omega^{2}-\mu^{2} $  . From equation \eqref{prop11} into \eqref{prop10}, we have these following relations:
\begin{equation}
D_{xx}=\frac{\lambda\hbar}{2m\Omega},\quad D_{pp}=\frac{\hbar\lambda m \omega^{2}}{2\Omega},\quad D_{px}=-\frac{\hbar \lambda \mu}{2\Omega} \label{prop12}\,\, .
\end{equation}
From \eqref{prop9} and \eqref{prop12}, we have obtained the following expression of variances which ensure that the initial pure states remain pure for any time $ t $:
\begin{equation}
\sigma_{xx}=\frac{\hbar}{2m\Omega},\quad \sigma_{pp}=\frac{\hbar m \omega^{2}}{2\Omega},\quad \sigma_{px}=-\frac{\hbar  \mu}{2\Omega}\,\, .
\end{equation}
Then the generalized uncertainty relation becomes :
\begin{equation}
(\mathcal{P}_{11})(\mathcal{X}_{11})-(\varrho_{11})^{2}=\sigma_{xx}\sigma_{pp}-\sigma_{px}^{2}=\frac{\hbar^{2}\omega^{2}}{4\Omega^{2}}-\frac{\hbar^{2}\mu^{2}}{4\Omega^{2}}=\frac{\hbar^{2}}{4}\frac{\omega^{2}-\mu^{2}}{\Omega^{2}}=\frac{\hbar^{2}}{4}\label{last1}\,\, .
\end{equation}
\end{proof}
This confirms that the states remain pure for all time if they satisfy this minimum uncertainty condition. In the overdamped regime ($ \mu > \omega $), the dissipation overwhelms the restorative potential. The mathematical consequence is that eq \eqref{prop11} admits no physical solution, meaning no pure state can possess time-independent variances that satisfy the master equation. Physically, this indicates that the environmental coupling is too violent to support any stable, minimal uncertainty wave-packet.  Now, let us look for the states which produce the least entropy.
\subsubsection{The states which produce the least entropy}
\begin{theorem}
	The states which produce the least entropy saturate the generalized uncertainty relations, that is  
	\begin{equation}
	(\mathcal{P}_{11})(\mathcal{X}_{11})-(\varrho_{11})^{2}=\sigma_{xx}(t)\sigma_{pp}(t)-\sigma_{px}^{2}(t)=\frac{\hbar^{2}}{4}\,\, .
	\end{equation} 
	 	\end{theorem}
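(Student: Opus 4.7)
The plan is to derive an explicit expression for the entropy production $\dot S_l(t)$ and show, using the bound \eqref{gen1}, that it is non-negative with its minimum attained exactly when the generalized uncertainty relation \eqref{gen2} is saturated. First I would differentiate $S_l=\operatorname{Tr}(\bm{\rho}-\bm{\rho}^{2})$ to obtain $\dot S_l=-2\operatorname{Tr}(\bm{\rho}\,L(\bm{\rho}))$. The Hamiltonian part of $L$ drops out because $\operatorname{Tr}(\bm{\rho}[\mathbf{H},\bm{\rho}])=0$ by cyclicity of the trace, so only the dissipator contributes. On a pure state, the projector identity $\bm{\rho}\mathbf{A}\bm{\rho}=\langle\mathbf{A}\rangle\bm{\rho}$ already used in the proof of Proposition 1 collapses the dissipator contribution to
\begin{equation*}
\dot S_l(t)=\frac{2}{\hbar}\sum_{j}\bigl(\langle\bm{V}_j^{\dagger}\bm{V}_j\rangle-|\langle\bm{V}_j\rangle|^{2}\bigr),
\end{equation*}
which is non-negative term by term by Cauchy--Schwarz, so already $\dot S_l\geq 0$.

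Next I would substitute $\bm{V}_j=a_j\mathbf{p}+b_j\mathbf{x}$ and expand each summand. Using the CCR to write $\langle\mathbf{p}\mathbf{x}\rangle$ and $\langle\mathbf{x}\mathbf{p}\rangle$ in terms of the symmetric $\sigma_{px}$ with the shifts $\mp i\hbar/2$, and inserting the definitions of $D_{xx},D_{pp},D_{px}$ together with $\lambda=-\operatorname{Im}\sum_j a_j^{*}b_j$ from Section 4, a direct algebraic manipulation gives
\begin{equation*}
\dot S_l(t)=\frac{4}{\hbar^{2}}\Bigl[D_{pp}\sigma_{xx}(t)+D_{xx}\sigma_{pp}(t)-2D_{px}\sigma_{px}(t)-\tfrac{\hbar^{2}\lambda}{2}\Bigr].
\end{equation*}

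The inequality \eqref{gen1} then says the bracket is non-negative, and the minimum-entropy-production condition $\dot S_l=0$ is precisely equality in \eqref{gen1}. By Proposition 2 this equality is equivalent to the purity-preserving condition and hence to the saturation $\sigma_{xx}(t)\sigma_{pp}(t)-\sigma_{px}^{2}(t)=\hbar^{2}/4$. Through the identification $\sigma(t)\equiv M_{11}(t)$ of Section 4, this reads $\mathcal{P}_{11}\mathcal{X}_{11}-\varrho_{11}^{2}=\hbar^{2}/4$, exactly as claimed.

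The step I expect to be most delicate is the algebra converting $\sum_j\bigl(\langle\bm{V}_j^{\dagger}\bm{V}_j\rangle-|\langle\bm{V}_j\rangle|^{2}\bigr)$ into the combination $D_{pp}\sigma_{xx}+D_{xx}\sigma_{pp}-2D_{px}\sigma_{px}-\hbar^{2}\lambda/2$: the friction constant $\lambda$ is produced only through the imaginary $-i\hbar/2$ shifts coming from the CCR, so careful sign book-keeping is required to avoid losing a factor. A secondary conceptual point worth noting is that restricting to pure initial states is exactly what Zurek's predictability sieve prescribes; on the pure-state manifold $\dot S_l\geq 0$ makes the minimum well-defined, and the above argument singles out the joint momenta-coordinates states of Section 3 as the unique pointer states.
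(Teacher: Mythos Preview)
Your proposal is correct and follows essentially the same route as the paper: compute $\dot S_l=-2\operatorname{Tr}(\bm{\rho}L(\bm{\rho}))$, use purity to reduce it to $\frac{4}{\hbar^{2}}\bigl[D_{pp}\sigma_{xx}+D_{xx}\sigma_{pp}-2D_{px}\sigma_{px}-\hbar^{2}\lambda/2\bigr]$, and invoke the equality case of \eqref{gen1} together with Proposition~2 to conclude saturation of \eqref{gen2}. The only cosmetic difference is that the paper substitutes the expanded generator \eqref{lindb2} directly, whereas you pass through the abstract Lindblad form \eqref{lindb1} and the intermediate variance expression $\sum_j(\langle\bm{V}_j^{\dagger}\bm{V}_j\rangle-|\langle\bm{V}_j\rangle|^{2})$ before inserting $\bm{V}_j=a_j\mathbf{p}+b_j\mathbf{x}$; both computations yield the same formula.
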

	\begin{proof}
		The linear entropy is given by 
		\begin{equation}
		S_{l}(t)=Tr(\bm{\rho}-\bm{\rho}^{2})=1-Tr(\bm{\rho}^{2})\,\, .
		\end{equation}
The time derivative of the above equation gives
\begin{equation}
\dot{S}_{l}(t)=-2Tr(\bm{\rho} \dot{\bm{\rho}})=-2Tr(\bm{\rho} L(\bm{\rho}))\,\, ,
\end{equation}
where L is the evolution operator. Using the expression of L in the equation \eqref{lindb2}, we have :
\begin{equation}
\begin{split}
\dot{S}_{l}(t)=&\frac{4}{\hbar^{2}}[ D_{pp}Tr(\bm{\rho}^{2}\mathbf{x}^{2}-\bm{\rho} \mathbf{x}\bm{ \rho}\mathbf{ x}) +D_{xx} Tr(\bm{\rho}^{2}\mathbf{p}^{2}- \bm{\rho}\mathbf{ p} \bm{\rho}\mathbf{ p}) \\ 
& -D_{px}Tr (\bm{\rho}^{2}(\mathbf{xp} +\mathbf{px}) -2\bm{\rho}\mathbf{ x}\bm{ \rho}\mathbf{ p})-\frac{\hbar^{2}\lambda}{2}Tr(\bm{\rho}^{2})  ] \,\, .
\end{split}
\end{equation}
For a  state that remains approximately pure $( \bm{\rho}^{2} \approx \bm{\rho})  $, we obtain
\begin{equation}
\dot{S}_{l}(t)=\frac{4}{\hbar^{2}}(D_{pp}\sigma_{xx}(t)+D_{xx}\sigma_{pp}(t)-2D_{px}\sigma_{px}(t)-\frac{\hbar^{2}\lambda}{2})\geq 0\,\, .
\end{equation}  
We see that $ \dot{S}_{l}(t) =0 $ if 
\begin{equation}
D_{pp}\sigma_{xx}(t)+D_{xx}\sigma_{pp}(t)-2D_{px}\sigma_{px}(t)=\frac{\hbar^{2}\lambda}{2}\,\, ,
\end{equation}
which implies that 
\begin{equation}
(\mathcal{P}_{11})(\mathcal{X}_{11})-(\varrho_{11})^{2}=\sigma_{xx}(t)\sigma_{pp}(t)-\sigma_{px}^{2}(t)=\frac{\hbar^{2}}{4} \label{last}\,\, .
\end{equation}
thus satisfying the condition of purity at all times.
\end{proof}
 The satisfaction of this equality is the definitive signature of a pointer state. Therefore, the joint momenta-coordinates states are unequivocally identified as the preferred, robust states that emerge from the quantum-to-classical transition in the underdamped harmonic oscillator. 
\section{Discussions and conclusions}
The question of how the classical world emerges from quantum mechanics has long been a subject of debate. In a sense, decoherence explains the washing away of quantum coherences and the emergence of a state which makes classical sense. We can say that decoherence explanation  takes away some of the mystery from the idea of \textit{wave function collapse} and
provides a conventional mechanism to explain the appearance of a classical world.

In \cite{highlighting}, it was conjectured that pointer states could be identified with joint momenta- coordinates states, but no formal proof was offered. In this work, we have resolved that conjecture. We advanced the theory in two key ways. First, we extended the analysis to both underdamped and overdamped regimes, showing that while joint momenta-coordinates states remain pure and robust in the underdamped case, no such pure pointer states survive in the overdamped regime \eqref{prop11}. This clarifies the physical distinction between the two dynamical regimes and provides a more complete picture of decoherence in open systems. The overdamped regime's strong dissipation prevents the dynamical stability required for pointer state formulation, explaining why studies like Isar's naturally focused on oscillatory dynamics. Second, we reinterpreted the problem within the quantum phase space formalism, proving rigorously that joint momenta-coordinates states which saturate generalized uncertainty  relations are the genuine pointer states selected by the predictability-sieve criterion \eqref{last}. 

This resolution of the conjecture reinforces the conceptual bridge between decoherence theory and the concept of quantum phase-space. By rigorously identifying pointer states with minimum uncertainty wave packets in phase space, we demonstrate that classicality emerges precisely from those states that are maximally localized and informationally complete, maximizing predictability while minimizing dispersion and decoherence. Our results extended the concept of phase space decoherence introduce by Brody et \textit{al}. \cite{brody} who demonstrate that phase space decoherence drives any state toward a mixture of coherent states. Our work explains why these particular states emerge as the robust pointer states under joint measurement dynamics. Our findings may contribute to broader foundational questions, including Penrose's proposal of \textit{gravitizing quantum mechanics}\cite{gravitizing}.  It may also has practical implications: joint momenta-coordinates states emerge as natural candidates for stable qubits, resilient against decoherence. Such states may be directly relevant to the design of error-resistant quantum information protocols, which constitutes the main direction of our future work.


\begin{thebibliography}{99}
	\bibitem{Zurek} Wheeler, J. A., and Zurek, W. H. \textit{Quantum Theory and Measurement.} Princeton University Press, 1983. https://doi.org/10.1515/9781400854552
	\bibitem{foundations}  Auletta, G.\textit{ Foundations and Interpretation of Quantum Theory}. World Scientific, Singapore, 2000.
	\bibitem{zeh} Zeh, H. D. On the Interpretation of Measurement in Quantum Theory. \textit{Foundations of Physics} \textbf{1}, 69 (1970). https://doi.org/10.1007/BF00708656
	\bibitem{joos} Joos, E., and Zeh, H. D. The Emergence of Classical Properties Through Interaction with the Environment.\textit{ Zeitschrift für Physik B Condensed Matter} \textbf{59}, 223 (1985). https://doi.org/10.1007/BF01725541	 
	\bibitem{zurek1}Zurek, W. H. Pointer Basis of Quantum Apparatus: Into What Mixture Does the Wave Packet Collapse? \textit{Physical Review D} \textbf{24}, 1516 (1981). https://doi.org/10.1103/PhysRevD.24.1516
	\bibitem{zurek2} Zurek, W. H. Environment-Induced Superselection Rules. \textit{Physical Review D } \textbf{26}, 1862 (1982). https://doi.org/10.1103/PhysRevD.26.1862
	\bibitem{ref6} Isar, A., and Scheid, W. Uncertainty functions of the open quantum harmonic oscillator in the Lindblad theory. Physical Review A \textbf{66}, 042117 (2002). https://doi.org/10.1103/PhysRevA.66.042117
	\bibitem{ravo2} Raoelina Andriambololona, Ranaivoson, R.T., Hanitriarivo, R., Randriamisy, D. E.: Dispersion Operator 
	Algebra and Linear Canonical Transformation, arXiv:1608.02268v2 [quant-ph], \textit{Int. J. Theor. Phys.}, 56 (4),  
	1258 (2017) 
	\bibitem{Weigner} Wigner, E. On the Quantum Correction for Thermodynamic Equilibrium. \textit{Physical Review} \textbf{40}, 749 (1932). https://doi.org/10.1103/PhysRev.40.749
	\bibitem{Bondar} Bondar, D. I., Cabrera, R., Zhdanov, D. V., and Rabitz, H. A. Wigner Phase-Space Distribution as a Wave Function. \textit{Physical Review A} \textbf{88}, 052108 (2013). arXiv:1202.3628 [quant-ph]. https://doi.org/10.1103/PhysRevA.88.052108
		\bibitem{invariant} Ranaivoson, R. T., Raoelina Andriambololona, Hanitriarivo, R., and Ravelonjato, R. H. M. Invariant Quadratic Operators Associated with Linear Canonical Transformations and Their Eigenstates. \textit{Journal of Physics Communications } \textbf{6}, 095010 (2022). arXiv:2008.10602 [quant-ph]. https://doi.org/10.1088/2399-6528/ac8e72
	\bibitem{correction} Ravelonjato, R. H. M., Ranaivoson, R. T., Raoelina  Andriambololona, Raboanary, R., Hanitriarivo, R., and Rabesiranana, N. Quantum and Relativistic Corrections to Maxwell-Boltzmann Ideal Gas Model from a Quantum Phase Space Approach.\textit{ Foundations of Physics} \textbf{53}, 88 (2023). https://doi.org/10.1007/s10701-023-00730-w
	\bibitem{highlighting} Ranaivoson, R. T., Hejesoa, V. S., Raoelina Andriambololona, Rasolofoson, N. G., Rakotoson, H., Rabesahala, J., Rarivomanantsoa, L., and Rabesiranana, N. Highlighting Relations between Wave-Particle Duality, Uncertainty Principle, Phase Space and Microstates. arXiv:2205.08538 [quant-ph] (2022).
	\bibitem{sieve} Dalvit, D. A. R., Dziarmaga, J., and Zurek, W. H. Predictability Sieve, Pointer States, and the Classicality of Quantum Trajectories. \textit{Physical Review A} \textbf{72}, 062101 (2005). arXiv:quant-ph/0509174. https://doi.org/10.1103/PhysRevA.72.062101
	\bibitem{decoherence} Giulini, D., Joos, E., Kiefer, C., Kupsch, J., Stamatescu, I.-O., and Zeh, H. D. \textit{Decoherence and the Appearance of a Classical World in Quantum Theory}. Springer, 1996.
	\bibitem{understanding} Brasil, C. A. Understanding the Pointer States. arXiv:1508.04101 [quant-ph] (2015).

	\bibitem{maximilian}  Schlosshauer, M. Decoherence, the Measurement Problem, and Interpretations of Quantum Mechanics. arXiv:quant-ph/0312059 (2005).
	\bibitem{venugopalan} Venugopalan, A. Measurement in Quantum Mechanics: Decoherence and the Pointer Basis.\textit{ Resonance} \textbf{4}, 61 (1999). https://doi.org/10.1007/BF02838713
	\bibitem{ref2}  Isar, A., Sandulescu, A., and Scheid, W. Density matrix for the damped harmonic oscillator within the Lindblad theory.\textit{ Journal of Mathematical Physics }\textbf{34}, 3887-3900 (1993). https://doi.org/10.1063/1.530013
	\bibitem{ref3} Isar, A., Sandulescu, A., and Scheid, W. Purity and decoherence in the theory of a damped harmonic oscillator. Physical Review E \textbf{60}, 6371 (1999). https://doi.org/10.1103/PhysRevE.60.6371
	\bibitem{anu} Venugopalan, A. Pointer States via Decoherence in a Quantum Measurement. arXiv:quant-ph/9909005 (1999).
	\bibitem{pointers} Nagele, C., Kneller, A., Plötz, P.-A., and Lein, M. Pointer States and the Jaynes-Cummings Model. \textit{Journal of Physics A: Mathematical and Theoretical } \textbf{56}, 085301 (2023). https://doi.org/10.1088/1751-8121/acb4d2
	\bibitem{linear} Ranaivoson, R. T., Raoelina Andriambololona, Rakotoson, H., and Raboanary, R. Linear Canonical Transformations in Relativistic Quantum Physics. \textit{Physica Scripta} \textbf{96}, 065217 (2021). arXiv:1804.10053 [quant-ph]. https://doi.org/10.1088/1402-4896/abf6a8
	\bibitem{neutrino} Ranaivoson, R. T., Raoelina Andriambololona, Rakotoson, H., Raboanary, R., Rajaobelison, J., and Randriantsoa, P. M. (2025). Quantum Phase Space Symmetry and Sterile Neutrinos. \textit{Journal of Subatomic Particles and Cosmology} \textbf{3}, 100039. arXiv:2502.16685 [quant-ph]. https://doi.org/10.1016/j.jspc.2025.100039
	\bibitem{lindblad} Lindblad, G. On the Generators of Quantum Dynamical Semigroups. \textit{Communications in Mathematical Physics} \textbf{48}, 119 (1976). https://doi.org/10.1007/BF01608499
	\bibitem{dekker} Dekker, H. Classical and Quantum Mechanics of the Damped Harmonic Oscillator. \textit{Physics Reports} \textbf{80}, 1 (1981). https://doi.org/10.1016/0370-1573(81)90039-4
	\bibitem{Isarsand} Isar, A., Sandulescu, A., Scutaru, H., Stefanescu, E., and Scheid, W. Open Quantum Systems. \textit{International Journal of Modern Physics E } \textbf{3}, 635 (1994). https://doi.org/10.1142/S0218301394000249
	\bibitem{hasse} Hasse, R. W. On the Quantum Mechanical Phase Space Equation of Motion for the Damped Harmonic Oscillator. \textit{Journal of Mathematical Physics} \textbf{16}, 2005 (1975). https://doi.org/10.1063/1.522436
	\bibitem{scutaru} Sandulescu, A., and Scutaru, H. Open Quantum Systems and the Damping of Collective Modes in Deep Inelastic Collisions. \textit{Annals of Physics} \textbf{173}, 277 (1987). https://doi.org/10.1016/0003-4916(87)90097-2
	\bibitem{isar} Isar, A. Uncertainty, Entropy and Decoherence of the Damped Harmonic Oscillator in the Lindblad Theory of Open Quantum Systems. arXiv:quant-ph/9910019 (1999).
	\bibitem{zurek}  Zurek, W. H., Habib, S., and Paz, J. P. Coherent States via Decoherence.\textit{ Physical Review Letters} \textbf{70}, 1187 (1993). https://doi.org/10.1103/PhysRevLett.70.1187
	\bibitem{gravitizing}  Penrose, R. On the Gravitization of Quantum Mechanics 1: Quantum State Reduction. \textit{Foundations of Physics} \textbf{44}, 557 (2014). https://doi.org/10.1007/s10701-013-9770-0
	\bibitem{ref4} Isar, A., and Scheid, W. Quantum decoherence and classical correlations of the harmonic oscillator in the Lindblad theory. Physica A: Statistical Mechanics and its Applications \textbf{373}, 298-312 (2007). https://doi.org/10.1016/j.physa.2006.04.065
	\bibitem{ref5} Isar, A., Sandulescu, A., and Scheid, W. Lindblad master equation for the damped harmonic oscillator with deformed dissipation. Physica A: Statistical Mechanics and its Applications \textbf{322}, 233-246 (2003). https://doi.org/10.1016/S0378-4371(02)01828-9
	\bibitem{brody} Brody, D.C., Graefe, E. M., Melanathuru, R. Phase Space Measurements, Decoherence, and Classicality. \textit{Physical Review Letters} \textbf{134}, 120201(2025).
	https://doi.org/10.1103/PhysRevlett.134.120201
	\bibitem{revisiting} Stephen, M.B., James D.C., Sarah, C. Revisiting the damped harmonic oscillator, \textit{Physica Scripta }\textbf{99}, O25109(2024)
	
\end{thebibliography}
\end{document}